\newcommand{\longdash}[1][1em]{%
  \makebox[#1]{$\m@th\smash-\mkern-7mu\cleaders\hbox{$\mkern-2mu\smash-\mkern-2mu$}\hfill\mkern-7mu\smash-$}}
\newcommand{\omitskip}{\kern-\arraycolsep}
\newtheorem{theorem}{Theorem}[section]
\newtheorem{lemma}[theorem]{Lemma}
\begin{document}
%

\title{Towards Positive Jacobian: Learn to\\  Postprocess Diffeomorphic Image Registration\\with Matrix Exponential}




%
\author{\IEEEauthorblockN{Soumyadeep Pal\thanks{The first author Soumyadeep Pal is a MSc student jointly supervised by the middle author Matthew Tennant amd the last author Nilanjan Ray.}\IEEEauthorrefmark{2},
Matthew Tennant\IEEEauthorrefmark{1} and
Nilanjan Ray\IEEEauthorrefmark{2}}
\IEEEauthorblockA{\IEEEauthorrefmark{2}Department of Computing Science, \IEEEauthorrefmark{1}Department of  Ophthalmology\\
University of Alberta\\
Alberta, Canada\\
Email: \{soumyade, mtennant, nray1\}@ualberta.ca}}



\maketitle

\begin{abstract}
We present a postprocessing layer for deformable image registration to make a registration field more diffeomorphic by encouraging Jacobians of the transformation to be positive. Diffeomorphic image registration is important for medical imaging studies because of the properties like invertibility, smoothness of the transformation, and topology preservation/non-folding of the grid. Violation of these properties can lead to destruction of the neighbourhood and the connectivity of anatomical structures during image registration. Most of the recent deep learning methods do not explicitly address this folding problem and try to solve it with a smoothness regularization on the registration field. In this paper, we propose a differentiable layer, which takes any registration field as its input, computes exponential of the Jacobian matrices of the input and reconstructs a new registration field from the exponentiated Jacobian matrices using Poisson reconstruction. Our proposed Poisson reconstruction loss enforces positive Jacobians for the final registration field. Thus, our method acts as a post-processing layer without any learnable parameters of its own and can be placed at the end of any deep learning pipeline to form an end-to-end learnable framework. We show the effectiveness of our proposed method for a popular deep learning registration method Voxelmorph and evaluate it with a dataset containing 3D brain MRI scans. Our results show that our post-processing can effectively decrease the number of non-positive Jacobians by a significant amount without any noticeable deterioration of the registration accuracy, thus making the registration field more diffeomorphic. Our code is available online at \url{https://github.com/Soumyadeep-Pal/Diffeomorphic-Image-Registration-Postprocess}
\end{abstract}


%
\IEEEpeerreviewmaketitle

\section{Introduction}
 
Deformable image registration is one of the fundamental tasks of medical image analysis that has been an active research topic for decades. It constructs a dense, non-linear transformation between a pair of images to align them. A significant application of deformable registration is the alignment of 3D brain magnetic resonance (MR) images for their analysis. Brain MR images can be acquired from different sensors, different subjects or at different times and thus are misaligned. Moreover, there is often a significant variability \cite{sparks2002brain} between these scans due to different anatomical variations and health states. Deformable image registration is useful in this case for the purpose of comparing different anatomical structures in the brain scans obtained from different sources. 
 
Traditional registration algorithms are often formulated as an optimization problem where a moving image is warped using a displacement field and the goal is to maximize the similarity between a fixed image and the warped moving image. This is usually solved using an iterative process, which is fairly computationally intensive and time consuming. However, recently deep learning approaches have been used in solving the deformable registration problem. The deep learning approaches maintain similar performance in terms of registration accuracy and are much faster.



One of the desirable properties of the transformations for registration in medical imaging is their one-to-one nature or invertibility, which ensures that there is no folding in the grid. Folding of the registration grid over itself can lead to connected sets becoming disconnected and disconnected sets becoming connected thereby destroying the neighbourhood structure that is detrimental for anatomical studies in medical imaging \cite{beg2005computing}. Classical diffeomorphic registration algorithms often use strategies, which ensure smooth, invertible transformations. However, deep learning based registration methods usually do not explicitly ensure invertibility and non-folding of the transformations. Such foldings in the deformations are usually constrained by enforcing spatial smoothness, which is controlled by a regularization hyperparameter. However, a large value of this hyperparameter can lead to inaccurate registration, while a small value can lead to folding and local errors, which makes it challenging to tune it.

 In this paper, we explicitly address the issue of folding with a postprocessing layer, which can be potentially inserted at the end of any registration pipeline giving a deformation field as its output. Our postprocessing step takes a deformation field as its input and provides another deformation with reduced foldings as its output with the help of matrix exponential and Poisson reconstruction. Moreover, this postprocessing layer is completely differentiable, hence it can fit in any deep learning pipeline for registration with end-to-end learning. We demonstrate the effectiveness of our method by the registration of 3D brain MR scans, which are obtained from \cite{Marcus2007}. We use our postprocessing layer with a widely used registration method named Voxelmorph \cite{balakrishnan2018unsupervised} and show that it significantly reduces the amount of folding when compared to Voxelmorph.

\section{Background}

In registration, we typically have a moving image, which we want to align with a fixed/reference image. Deformable image registration entails warping the moving image with a dense voxel-wise non-linear spatial transform so that it matches with a fixed image in terms of a similarity metric. This is in contrast to the rigid/affine registration paradigm where we have a linear rigid-body transformation (like rotation and translation). A typical registration process involves an affine transformation for a global alignment between the fixed and moving image and then a deformable registration step. Let $F$ be the fixed image and $M$ be the moving image. Using a displacement field $\phi$ to warp $M$, the objective of deformable registration is to find an optimal displacement field:
\begin{equation}
    \phi^* = \mathop{\arg \min}\limits_{\phi} \mathcal{L}_{sim}(F,M(\phi)) + \lambda \mathcal{L}_{reg}(\phi)
    \label{regis}
\end{equation}
where $\phi^*$ is the optimal displacement field, $\mathcal{L}_{sim}$ is a dissimilarity loss function and $\mathcal{L}_{reg}$ is the function that enforces a smoothness regularization. There are different dissimilarity functions that are typically used like mean square voxel difference, negative cross-correlation and mutual information to minimize the dissimilarity between the fixed image and moving image warped with $\phi$. The regularization term is often a norm of the gradient of the displacement field, but it can also be used for curvature regularization. 


However, optimizing (\ref{regis}) does not ensure that the registration field is diffeomorphic and thus it does not have the desirable properties of invertibility/non-folding. Diffeomorphic frameworks usually achieve that by considering a registration field as the integral of a velocity vector field \cite{ashburner2007fast}. In such cases, the deformation field is an exponential map of the vector field and lies in a Lie group, thus making the deformation diffeomorphic \cite{ashburner2007fast}.

In this paper, we focus on reducing the number of non-positive Jacobians of the final registration field. Jacobian matrices of the deformation field encode local volume changes and a negative determinant of the Jacobian matrices indicates loss of invertibility, which results in folding of the registration grid \cite{ashburner2007fast}. Positive Jacobians preserve the orientation of a transformation map and guarantee invertibility. One potential way to tackle negative Jacobians is to simply add a penalty in the optimization objective / loss function similar to \cite{kervadec2019constrainedinequality}. However that is a soft constraint for the inequality of the Jacobian being positive. Lagrangian dual optimization \cite{boyd2004convex} is a standard method to handle inequalities in an optimisation problem. However, it is generally avoided in deep learning due to stability issues and computational burden \cite{marquez2017imposing}. Log barrier extensions \cite{kervadec2019constrained} may be used to approximate the Lagrangian optimization, however it may be a non-trivial way to address the issue. Here, we introduce a simpler postprocessing layer that can successfully increase the number of positive Jacobians and can be easily plugged into a deep learning framework.

\section{Related Works}

Classical registration algorithms often model deformations as a physical model like elasticity and fluid flow. They perform iterative optimization to minimize an energy functional that is similar to (\ref{regis}) and is based on the physical model. These include the elastic type models (\cite{shen2002hammer,gefen2003elastic, bajcsy1989multiresolution}), fluid flow models (\cite{christensen1996deformable,bro1996fast}). Different registration methods change the space of optimization by parameterizing the displacement field. Such algorithms include free form deformation with b-splines \cite{rueckert1999nonrigid}, radial basis spline methods \cite{fornefett2001radial}, thin plate splines (\cite{vserifovic2009intensity,zhen2015segmentation}). Other methods for deformable registration include Demons \cite{thirion1998image}, statistical parameter mapping \cite{hellier2002inter}, DRAMMS \cite{ou2011dramms}. Some registration algorithms perform the optimization by constraining it in the space of diffeomorphic maps, thus giving a desirable diffeomorphic registration. Popular methods under this category include LDDMM \cite{beg2005computing}, diffeomorphic demons \cite{vercauteren2009diffeomorphic}, DARTEL \cite{ashburner2007fast}, standard symmetric normalization (SyN) \cite{avants2008symmetric}.

Due to the advent of deep learning, recent works have used convolutional neural networks to perform registration. These methods can be broadly divided into supervised and unsupervised ones. Supervised deep learning methods for deformable registration usually learn a deep learning model estimating displacement fields using ground truth images (\cite{krebs2017robust,yang2017quicksilver,sokooti2017nonrigid,rohe2017svf,cao2017deformable}). However, this may result in transformations biased by ground truths and in practical problems, it is very difficult to obtain a large amount of ground-truth information. Thus, unsupervised deep learning approaches have also been developed, which will be the focus of this paper.

\begin{figure*}[!h]
\centering
\includegraphics[width=7in]{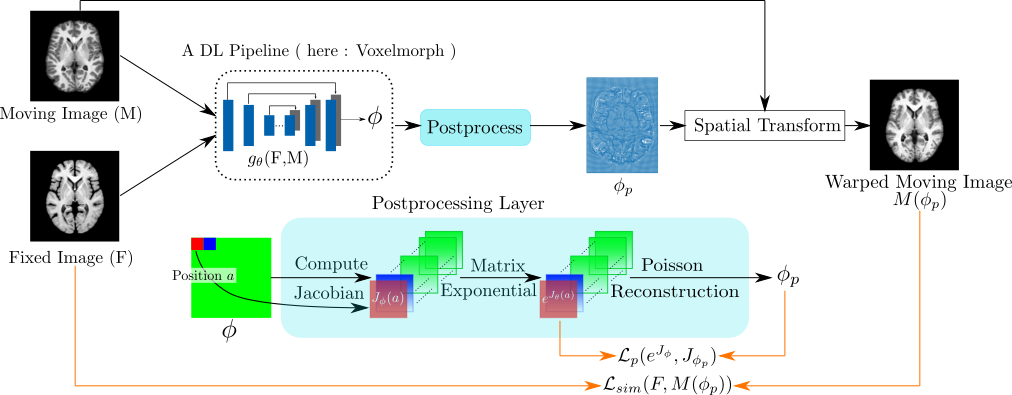}
\label{JacFilt}
\caption{Overview of the proposed postprocessing layer for Diffeomorphic Image Registration. The output from the registration pipeline $\phi$ is the input to out proposed layer which gives $\phi_p$ as its output. $J_\phi$ indicate the Jacobian matrices of a displacement field $\phi$. We omit the regularization loss for simplicity}
\end{figure*}

Typically, unsupervised deep learning methods approximate the displacement field with a CNN, warp the moving image using a spatial transformer and minimize a similarity metric similar to (\ref{regis}). Vos et.al.\cite{de2017end} consider corresponding patches from the fixed and moving image as inputs to the CNN, a b-spline transformer and train with the cross-correlation similarity metric. Li et.al.\cite{li2017non} also consider a fully convolutional network and also uses the cross-correlation similarity metric along with a total variation based regularizer. These methods demonstrate their efficiency using 2D images or small 3D image regions. Balakrishnan et.al.\cite{balakrishnan2018unsupervised} present a CNN model, popularly called Voxelmorph, a spatial transformation function based on spatial transformer networks \cite{jaderberg2015spatial} and optimize a cross-correlation based similarity metric along with a gradient regularizer. 

These deep learning methods do not typically consider the diffeomorphic properties of the registration field. There have been few works that look into learning diffeomorphic registration fields. Dalca et.al. \cite{dalca2018unsupervised} use a generative CNN model to estimate the distribution for a velocity field and find a diffeomorphic deformation field by integration of the velocity field by scaling and squaring. Mok et.al. \cite{mok2020fast} use a similar scaling and squaring theme, but also develop a symmetric similarity framework considering both the forward and backward transformation between a pair of images. Kuang et.al. \cite{kuang2019faim} use a penalty loss to constrain non-positive Jacobians and train their CNN with a use cross correlation similarity loss. \cite{dalca2018unsupervised} and \cite{mok2020fast} mainly use scaling and squaring based velocity field integration to produce diffeomorphic registration fields. In this paper, we focus on developing a postprocessing layer that can reduce the number of non-positive Jacobians and can potentially fit in any deep learning registration pipeline.

\section{Method}
Learning algorithms for diffeomorphic registration usually do not look into explicitly preventing folding of the transformation in different voxel locations. In our setting, we tackle this issue with the help of matrix exponential.  

Let $F, M$ be two 3D image volumes. As mentioned before, the two image volumes are initially aligned with a global deformation as a preprocessing step such that the remaining misalignment between $F$ and $M$ is non-linear.  

A function $g_{\theta}(F,M)$ is modelled by a convolutional neural network (CNN), such that the neural network outputs a displacement field or registration field $\phi$ \cite{balakrishnan2018unsupervised}. The displacement field is a four dimensional vector that determines the displacement between $F$ and $M$. Considering $Id$ as an identity transform, the transformaton $Id + \phi$ is used to warp the moving image, such that for each voxel location $q$, $F(q)$ and $M(\phi(q))$ are identical.  

For our experiments we use the Voxelmorph-2 architecture described in \cite{balakrishnan2018unsupervised} as our neural network. We insert a postprocessing layer at the end of the neural network, which takes in the displacement field $\phi$ and gives a new displacement field $\phi_p$ as an output, which potentially contains much less folding when compared to $\phi$.

\begin{figure*}[!h]
\centering
\includegraphics[width=6in]{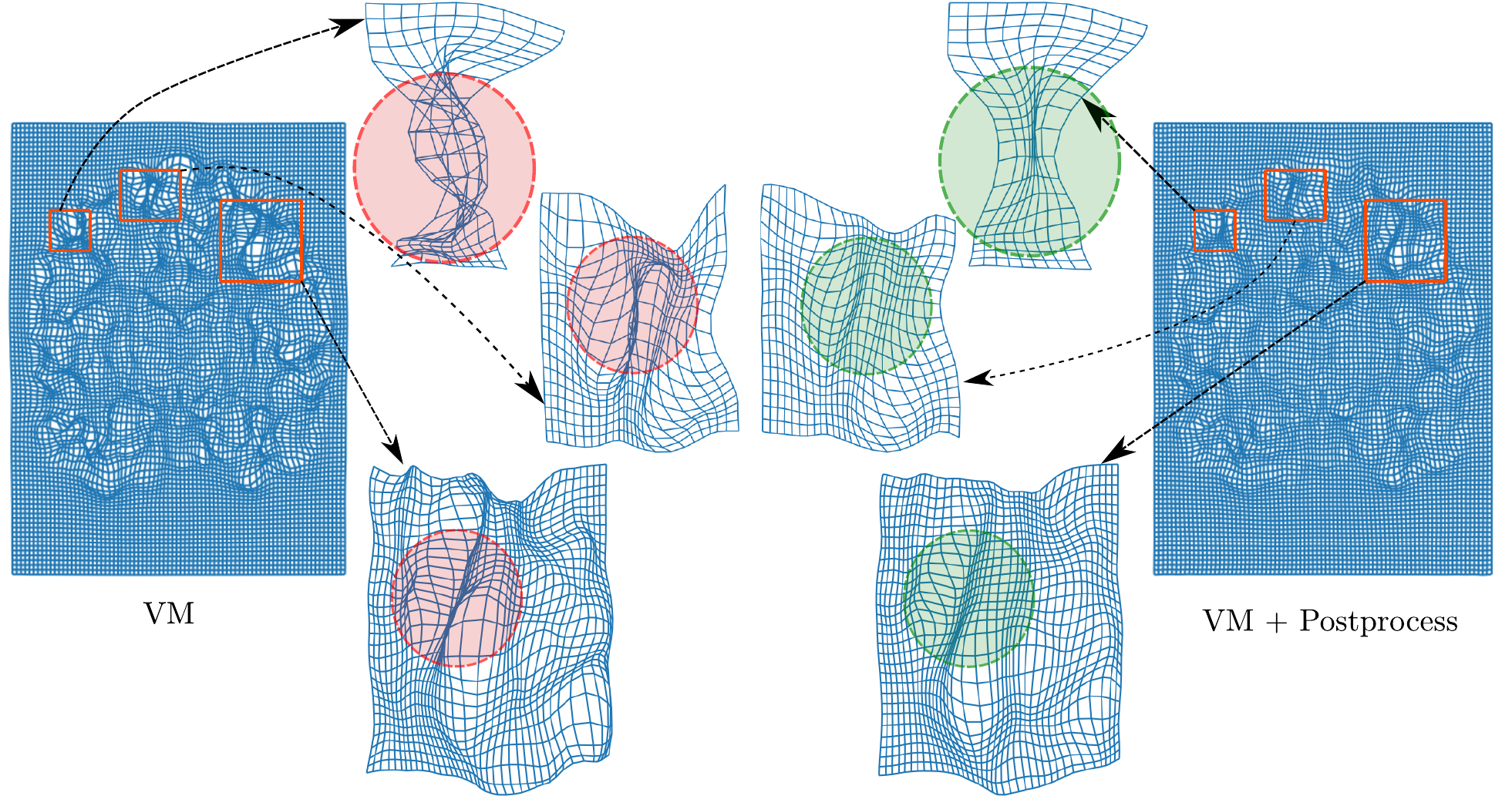}
\label{JacFilt}
\caption{Illustration showing the effect of the Postprocessing Layer on foldings in the grid used for registration for one slice. VM = Voxelmorph. Red circles indicate folding of the grid over itself. Green circles indicate no foldings in same region.}
\end{figure*}

\subsection{Postprocess Formulation}
As shown in Fig.1, the postprocessing step essentially finds the Jacobian matrix of the displacement field, computes its exponential and reconstructs a new displacement field from the exponentiated Jacobian matrix.

Let $\phi , \phi_p \in \mathbb{R}^{H \times D \times W \times 3}$ be displacement fields and $q = (x,y,z)$ be a voxel location. $H, D, W$ respectively, refer to height, width and depth of the image volumes. The Jacobian matrix of the displacement field $\phi = (\phi_x, \phi_y, \phi_z)$ at $q$ is defined as:
\begin{equation}
    Jac(\phi(q)) = \begin{bmatrix}
     \nabla\phi_x(q)\\
     \nabla\phi_y(q)\\
     \nabla\phi_z(q)
    \end{bmatrix}  
    = \begin{bmatrix}
     \frac{\partial \phi_x(q)}{\partial x} & \frac{\partial \phi_x(q)}{\partial y} & \frac{\partial \phi_x(q)}{\partial z} \\
     \frac{\partial \phi_y(q)}{\partial x} & \frac{\partial \phi_y(q)}{\partial y} & \frac{\partial \phi_y(q)}{\partial z} \\
     \frac{\partial \phi_z(q)}{\partial x} & \frac{\partial \phi_z(q)}{\partial y} & \frac{\partial \phi_z(q)}{\partial z}
    \end{bmatrix}  
\end{equation}
Using matrix exponential, we get the matrix $J'$ :
\begin{equation}
    J'(q) = \begin{bmatrix}
    J'_x(q) \\
    J'_y(q) \\
    J'_z(q)
    \end{bmatrix}   = e^{Jac(\phi(q))}
    \label{Jexp}
\end{equation}
We compute the matrix exponential of the $3 \times 3$ matrix $Jac(\phi(q))$ by a series summation scheme as mentioned in \cite{nan2020drmime}.
The postprocessed displacement field is reconstructed by solving the following Poisson's equations with Dirichlet boundary conditions:
\begin{equation}
 \begin{split}
    &\Delta \phi_{p_x} = \nabla \cdot  J'_x \\
    &\Delta \phi_{p_y} = \nabla \cdot  J'_y \\
    &\Delta \phi_{p_z} = \nabla \cdot  J'_z \\
 \end{split}
 \label{Pois}
\end{equation}
The value of the displacement field at the boundaries is considered 0, because of the nature of the images.

We solve each Poisson's equation adapting from \cite{gilbertcse} using Discrete Sine Transform (DST).
Computing the right hand side of (\ref{Pois}) is straightforward because we have $J'_x, J'_y, J'_z$ from (\ref{Jexp}).

The solution of the Poisson's equations is given by
\begin{equation}
 \begin{split}
    \phi_{p_t}(i,j,k) = \sum_{l=1}^H \sum_{m=1}^D & \sum_{n=1}^W \frac{a_{lmn}}{\lambda_{lmn}} \sin{\frac{il\pi}{H+1}}\\
    &\sin{\frac{jm\pi}{D+1}}\sin{\frac{kn\pi}{W+1}}, \\
    \text{for} \ t=x,y,z,
 \end{split}
\end{equation}
where $\lambda_{lmn}$ are the eigenvalues of a Laplacian matrix as follows:
\begin{equation}
 \begin{split}
    \lambda_{lmn} = ( 2 - 2 \cos{\frac{l\pi}{H+1}}) + (2 - 2 \cos{\frac{m\pi}{D+1}}) + \\
    (2 - 2 \cos{\frac{n\pi}{W+1}})
 \end{split}
\end{equation}

The values of $a_{lmn}$ can be found from the following relation based on DST:
\begin{equation}
 \begin{split}
   \nabla \cdot  J'_{t} = \sum_{l=1}^H \sum_{m=1}^D & \sum_{n=1}^W a_{lmn}\sin{\frac{il\pi}{H+1}}\\
    &\sin{\frac{jm\pi}{D+1}}\sin{\frac{kn\pi}{W+1}}.
 \end{split}
\end{equation}
Thus the solution of (\ref{Pois}) can be found by computing the DST of the divergence of $J'$, dividing the result by the corresponding eigenvalues and computing the inverse DST of the result. This makes the reconstruction of $\phi_p$ computationally efficient. Details of this numerical method appear in \cite{gilbertcse}. The solution is efficient because DST is a separable operation.

\subsection{Postprocess Analysis}

In this postprocessing layer, we exploit the properties of matrix exponential and Poisson reconstruction to achieve a reduced number of non-positive Jacobians. The matrices $J'$ in \ref{Jexp} all have positive determinants, because of the following property of matrix exponential:
\begin{equation}
 \begin{split}
   det(J'(q)) = det(e^{Jac(\phi(q))}) = e^{tr(Jac(\phi(q)))} > 0 
 \end{split}
\end{equation}
If these matrices $J'$ are valid Jacobian matrices, i.e. if they are integrable, then a perfect reconstruction of a field from these will give us a displacement field with all positive Jacobians. In Appendix, we show that under certain conditions, we can get such integrable matrices.

The unique solutions of the Poisson equations in (\ref{Pois}) is the solutions of the minimization problems: \cite{perez2003poisson}:
\begin{equation}
 \begin{split}
   &\mathop{\min}\limits_{\phi_{p_x}} ||\nabla \phi_{p_x} - J'_x||^2_2, \\  
   &\mathop{\min}\limits_{\phi_{p_y}} ||\nabla \phi_{p_y} - J'_y||^2_2, \\  
  &\mathop{\min}\limits_{\phi_{p_z}} ||\nabla \phi_{p_z} - J'_z||^2_2. \\  
 \end{split}
 \label{PoisLoss}
\end{equation}
Thus, the Poisson reconstruction ensures that the Jacobian matrices of the reconstructed field $\phi_p$ are close to matrices with positive determinants, even if $J'_{t}$s are not integrable.

\begin{figure*}[!h]
\centering
\includegraphics[width=7in]{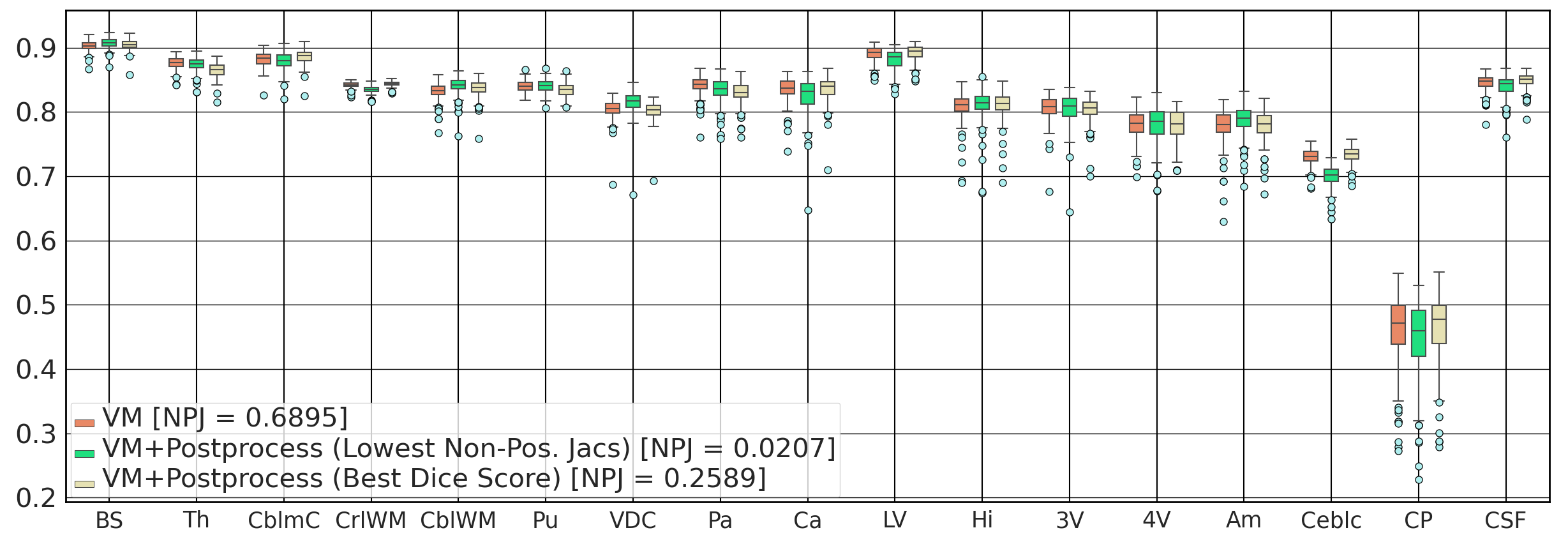}
\label{JacFilt}
\caption{Box Plot of Dice Scores of different anatomical structures for VM, model of VM with Postprocessing Layer giving best Dice Score and VM with Postprocessing Layer giving lowest percentage of non-positive Jacobians. NPJ = Percentage of non-positive Jacobians. Structures with left and right hemispheres are combined into one for this illustration. Anatomical structures: Brain Stem (BS), Thalamus (Th), Cerebellum Cortex (CblmC), Cerebral White Matter (CrlWM), Cerebellum WM (CblWM), Putamen (Pu) , Ventral-DC (VDC), Pallidum (Pa), Caudate (Ca), Lateral Ventricle (LV), Hippocampus (Hi), 3rd Ventricle 3V), 4th Ventricle (4V), Amygdala (Am), Cerebral Cortex (Ceblc), Choroid Plexus (CP) and CSF.}
\end{figure*}

\subsection{Poisson Reconstruction Loss}

The Poisson reconstruction step plays a crucial role in reducing the non-positive Jacobians. Note that a lower loss value in (\ref{PoisLoss}) implies a better integrability condition for the matrix $J'$. Hence, adding the three losses in (\ref{PoisLoss}), we introduce a Poisson reconstruction loss $L_p$ as follows:
\begin{equation}
L_p = \sum_{q} \|e^{Jac(\phi(q))} - Jac(\phi_p(q))\|_2^2
\end{equation}

In our experiments, we use a loss function similar to the Voxelmorph loss \cite{balakrishnan2018unsupervised} along with our Poisson reconstruction loss. Following \cite{balakrishnan2018unsupervised}, the similarity measure $L_{sim}$ is given by the negative cross correlation (CC) and the regularization loss $L_{reg}$ is a diffusion regularizer over the spatial gradients. Hence, the complete loss is as follows:
\begin{equation}
\begin{split}
    L(F,M) = -CC(F,M(\phi_p)) +  \lambda \sum_{p}\|\nabla\phi_p(p)\|^2
    + \lambda_p L_p,
\end{split}
\end{equation}
where $\lambda_p$ is the hyperparameter determining the strength of the Poisson reconstruction loss.



\section{Experiments and Results}
\subsection{Data}
In this paper, we use the open-access OASIS dataset \cite{Marcus2007} to evaluate our postprocess step. The dataset contains 414 T1-weighted brain MRI scans from subjects aged 18 to 96. We obtain the preprocessed dataset from \cite{adalca}. The MRI scans were preprocessed \cite{hoopes2021hypermorph} using Freesurfer \cite{fischl2012freesurfer} by standard steps like resampling, bias correction, skull stripping, affine normalization and center cropping into volumes of $160 \times 192 \times 224$. For our experiments we split the dataset into training, validation and test set of sizes 255, 15 and 144 respectively.

We perform atlas-based registration for our experiments i.e. we aim to establish anatomical correspondence between the moving images and the reference image/atlas. An atlas can be a single volume or an average of volumes in the same image space. Atlas-based registration is commonly applied to register inter-subject images. In this paper, we use an atlas constructed from a different dataset \cite{sridharan2013quantification} and also used in the official implementation of Voxelmorph \cite{balakrishnan2018unsupervised}.

\subsection{Evaluation Metric}

We evaluate the performance of our postprocessing layer with two metrics: Dice Score (DS) and the percentage of non-positive Jacobian determinants ($|J_{\phi_p}|\leq 0$)  \cite{mok2020fast}. The Dice Score measures the volume overlap of different segmented anatomical structures. Considering $S^k_F$ and $S^k_{M(\phi_p)}$ to be the sets of voxels for an anatomical structure $k$ for $F$ and $M(\phi_p),$ respectively, the Dice Score \cite{dice1945measures} is given by :
\begin{equation}
    DS(S^k_F, S^k_{M(\phi_p)}) = 2 \times \frac{S^k_F \cap S^k_{M(\phi_p)}}{|S^k_F|+|S^k_{M(\phi_p)}|}
\end{equation}
For our analysis, we consider 30 anatomical structures for the computation of Dice Score \cite{balakrishnan2019voxelmorph}. A Dice Score of 1 is highest since it indicates complete overlap and no overlap gives the lowest score of 0. The goal of our postprocessing step is to reduce the number of non-positive Jacobians. Hence we also measure the percentage of voxels which have non-positive Jacobian determinants to evaluate our layer.

\subsection{Implementation}

Since we propose a postprocessing layer in this paper, we compare the results between an existing framework, namely the voxelmorph framework (VM) \cite{balakrishnan2018unsupervised} and that of our layer used in conjunction with the voxelmorph framework. We implement our method using Pytorch \cite{paszke2017automatic} . We train our models using the Adam optimizer \cite{kingma2014adam} with a learning rate of $1e^{-4}$. We train both VM and VM in conjunction with our proposed postprocess layer and tune the respective hyperparameters with grid search. Based on the best dice score from the validation set, we get the best result for our setting using $\lambda = 1.0$ and $\lambda_p = 0.01$.

\begin{table}[!t]
\renewcommand{\arraystretch}{1.3}
\caption{Average Dice Score (higher is better) and Average Percentage of Non-Positive Jacobians (lower is better) with $\lambda = 1, 2$ and increasing $\lambda_p$. VM = Voxelmorph. Standard Deviation Given in Parenthesis}
\label{table_example}
\centering
\begin{tabular}{c  c  c  c c}
\hline
\bfseries Method & \bfseries $\lambda$ & \bfseries $\lambda_{p}$ & \bfseries Avg. Dice & \bfseries \% of $|J_{\phi_p}|\leq 0$\\
\hline\hline
\      VM           &            1.0       & \multicolumn{1}{l}{-} & \multicolumn{1}{l}{0.8056 (0.0084)} & \multicolumn{1}{l}{0.6895 (0.0950) } \\\cline{1-5} 
                      & \multirow{4}{*}{$1.0$}  & \multicolumn{1}{l}{0} & \multicolumn{1}{l}{0.8051 (0.0081)} & \multicolumn{1}{l}{0.2964 (0.0461)} \\\cline{3-5}
         VM +          &           & \multicolumn{1}{l}{0.01} & \multicolumn{1}{l}{0.8053 (0.0082) } & \multicolumn{1}{l}{0.2589 (0.0411)} \\\cline{3-5}
     Postprocess                  &           & \multicolumn{1}{l}{0.05} & \multicolumn{1}{l}{0.8045 (0.0087)} & \multicolumn{1}{l}{0.2255 (0.0378)} \\\cline{3-5}
                       &                       & \multicolumn{1}{l}{0.1} & \multicolumn{1}{l}{0.8024 (0.0095)} & \multicolumn{1}{l}{0.1154 (0.0280)} \\
\hline
VM                   &      2.0        & \multicolumn{1}{l}{-} & \multicolumn{1}{l}{0.8048 (0.0086)} & \multicolumn{1}{l}{0.2881 (0.0501)} \\\cline{1-5} 
                    &   \multirow{4}{*}{$2.0$}    & \multicolumn{1}{l}{0} & \multicolumn{1}{l}{0.8025 (0.0092)} & \multicolumn{1}{l}{0.0639 (0.0187)} \\\cline{3-5}
       VM +        &               & \multicolumn{1}{l}{0.01} & \multicolumn{1}{l}{0.8032 (0.0100)} & \multicolumn{1}{l}{0.0609 (0.0190)} \\\cline{3-5}
      Postprocess       &           & \multicolumn{1}{l}{0.05} & \multicolumn{1}{l}{0.8043 (0.0096)} & \multicolumn{1}{l}{0.0672 (0.0201)} \\\cline{3-5}
                    &                    & \multicolumn{1}{l}{0.1} & \multicolumn{1}{l}{0.8025 (0.0103)} & \multicolumn{1}{l}{0.0207 (0.0001)} \\
\hline
\end{tabular}
\end{table}

\begin{table}[!t]
\renewcommand{\arraystretch}{1.3}
\caption{Average Run Time in secs for Registration of Pair of Images (lower is better). VM = Voxelmorph. Standard Deviation Given in Parenthesis}
\label{table_example}
\centering
\begin{tabular}{c c}
\hline
\bfseries Method & \bfseries Time in sec\\
\hline\hline
VM  & 0.60 (0.10)\\
VM+Postprocess & 1.87 (0.20)\\
\hline
\end{tabular}
\end{table}

\subsection{Registration Performance}

Table I shows the average dice score and the average percentage of voxels with non positive Jacobians for all subjects in the test set for our experiments for different values of $\lambda$ and $\lambda_p$. We observe that adding our postprocessing layer does not noticeably alter the dice score performance, however it reduces the percentage of non-positive Jacobians by a significant amount. Thus our proposed postprocessing layer can reduce folding (see Fig. 2 for an example) of the registration grid while maintaining a high registration accuracy (in terms of dice score), thus giving more diffeomorphic transformations. 

In Fig. 3, we also show the average dice score for different anatomical structures in the brain as a boxplot. We demonstrate that for VM, VM with postprocessing layer giving the best Dice and VM with postprocessing layer giving the lowest percentage of non-positive jacobians.

\subsection{Effect of the Poisson Reconstruction Loss}

We also demonstrate the effect the proposed reconstruction loss in Table I. We show the average percentage of non-positive jacobians for increasing values of $\lambda_p$ with the gradient regularization $\lambda = 1,2$. As the weight of the reconstruction term increases through $\lambda_p$, we observe that amount of non-positive Jacobians decreases; however there is not much of a decrease in Dice score. Thus, with increasing $\lambda_p$, the reconstruction loss tries to reconstruct a displacement field $\phi_p$, whose Jacobian is increasingly closer to $e^{J_\phi}$ and thus is more diffeomorphic. Hence, our proposed loss is successful in making the deformations more diffeomorphic without sacrificing too much registration accuracy. 

\subsection{Runtime Analysis}

Table II shows the average time required to register a pair of images when we use a VM trained model and when we use a VM model trained along with our layer. We perform the deformable registration of a MRI scan of a test subject to the atlas using a NVIDIA Tesla P100 GPU and an Intel Xeon (E5-2683 v4) CPU. The runtime for registration when we add our layer is greater than that for VM by just about 1.2 seconds. Thus, it maintains the advantage of deep learning methods being faster than traditional registration methods.

\section{Conclusion}
In this paper we have presented a postprocessing layer which can fit in a deep learning registration framework with end-to-end learning. We evaluate our layer using large scale brain MR dataset with the Voxelmorph framework and show that our layer is successful in reducing folding in the registration grid and maintaining high registration accuracy. Even though we employ a Poisson equation solver, our layer still maintains the advantage of fast registration, a desirable characteristic of deep learning algorithms. We hope that our postprocessing layer can be used in other registration frameworks desiring more diffeomorphic registration fields.

The exponentiated Jacobian is not always integrable, but under certain conditions it can give a valid Jacobian as explored in the Appendix and that will lead to a theoretical guarantee of strictly positive Jacobians. Thus, for future work, we hope to develop constrained registration fields that can lead to a theoretically guaranteed, fully diffeomorphic registration.

\appendix

\begin{lemma} If $J$ is a Jacobian matrix of a conservative vector field, then
$e^J$ will also be a Jacobian matrix of a conservative vector field, when there is a $3 \times 3$ matrix $A$, such that $\frac{\partial J}{\partial x}J=A \frac{\partial J}{\partial x}$, $\frac{\partial J}{\partial y}J=A \frac{\partial J}{\partial y}$, and $\frac{\partial J}{\partial z}J=A \frac{\partial J}{\partial z}$.
\end{lemma}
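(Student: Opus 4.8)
We read ``conservative vector field'' as ``gradient field,'' so the hypothesis is that $J=\mathrm{Hess}(f)$ for some scalar potential $f$ on the (simply connected, box‑shaped) image domain; writing $\phi=\nabla f$ we then have $J_{lj}=\partial_{x_j}\phi_l$ with $\partial_{x_k}J_{lj}=\partial_{x_k}\partial_{x_j}\phi_l$ symmetric in $(j,k)$, and $J$ itself symmetric. The goal is to produce a scalar $g$ with $e^{J}=\mathrm{Hess}(g)$. The plan is to split this into two independent facts: (i) $e^{J}$ is symmetric at every point, which is immediate since $e^{J}=\sum_{n\ge 0}J^{n}/n!$ and each $J^{n}$ is symmetric when $J$ is; and (ii) $e^{J}$ is the Jacobian of \emph{some} vector field, i.e.\ each of its rows is a closed $1$‑form. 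Granting (i) and (ii), the Poincar\'e lemma on the box gives a vector field $h$ with $Jac(h)=e^{J}$; symmetry of $e^{J}$ makes $Jac(h)$ symmetric, hence $h=\nabla g$ and $e^{J}=Jac(\nabla g)=\mathrm{Hess}(g)$. (If ``conservative'' is intended only in the weak sense that the rows of a Jacobian are automatically gradients, then step (ii) alone is the whole content and (i) is not needed.)

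The heart of the matter is (ii), and this is where the commutation hypothesis enters. Differentiating the exponential series term by term (legitimate since the partial‑derivative series converges uniformly on compact sets) gives $\partial_{x_k}e^{J}=\sum_{n\ge 1}\frac{1}{n!}\sum_{a+b=n-1}J^{a}(\partial_{x_k}J)J^{b}$. The relation $(\partial_{x_k}J)\,J=A\,(\partial_{x_k}J)$ lets us push every factor of $J$ standing to the right of $\partial_{x_k}J$ through it, converting it into an $A$ on the left: by an easy induction $(\partial_{x_k}J)\,J^{b}=A^{b}(\partial_{x_k}J)$, so $J^{a}(\partial_{x_k}J)J^{b}=J^{a}A^{b}(\partial_{x_k}J)$. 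Therefore
\[
\partial_{x_k}e^{J}\;=\;\Phi\,(\partial_{x_k}J),\qquad \Phi:=\sum_{n\ge 1}\frac{1}{n!}\sum_{a+b=n-1}J^{a}A^{b}\;=\;\int_{0}^{1} e^{sJ}e^{(1-s)A}\,ds ,
\]
and the crucial point is that $\Phi$ is assembled only from $J$ and $A$: it does \emph{not} depend on which spatial direction $x_k$ we differentiated in. This is exactly what the hypothesis buys — the same intertwiner $A$ serves $\partial_x$, $\partial_y$ and $\partial_z$ simultaneously.

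With this in hand, (ii) is a one‑line check. Since $J_{lj}=\partial_{x_j}\phi_l$, the $(i,j)$ entry of $\partial_{x_k}e^{J}$ equals $\sum_{l}\Phi_{il}\,\partial_{x_k}\partial_{x_j}\phi_l$, which is symmetric under interchange of $j$ and $k$ because the mixed second partials of each scalar $\phi_l$ commute. Hence $\partial_{x_k}(e^{J})_{ij}=\partial_{x_j}(e^{J})_{ik}$, i.e.\ each row of $e^{J}$ is curl‑free, so the Poincar\'e lemma on the (simply connected) box domain yields the vector field whose Jacobian is $e^{J}$; combining with the symmetry from (i) finishes the proof as indicated above.

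I expect no genuine obstruction here: the only subtlety is the bookkeeping in the series manipulation together with routine analytic housekeeping — convergence of the $\Phi$‑series (note $\left\|\sum_{a+b=n-1}J^{a}A^{b}\right\|\le n\max(\|J\|,\|A\|)^{n-1}$, so $\|\Phi\|\le e^{\max(\|J\|,\|A\|)}<\infty$) and the justification of termwise differentiation of $e^{J}$. The conceptual content to get exactly right, and the reason the lemma asks for a single common $A$ rather than merely that $J$ commute with each $\partial_{x_k}J$, is precisely that the direction‑independence of the prefactor $\Phi$ is what makes the mixed‑partial symmetry argument close. Finally, note that $A$ may be taken to be a matrix field $A(q)$ and all identities above are used pointwise in $q$, so no regularity of $A$ is required beyond what is needed for the stated relations; smoothness of $J$ already supplies smoothness of $e^{J}$.
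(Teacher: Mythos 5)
Your proposal is correct and is essentially the paper's argument: both hinge on showing that the hypothesis forces $\frac{\partial}{\partial x_k}e^{J}=\Phi\,\frac{\partial J}{\partial x_k}$ with a single prefactor $\Phi$ that does not depend on the differentiation direction, after which equality of mixed partials of $\phi$ makes each row of $e^{J}$ curl-free and the Poincar\'e lemma finishes. The only difference is cosmetic: the paper obtains this identity from the derivative-of-the-exponential formula $\frac{\partial}{\partial t}e^{J}=e^{J}\{\frac{I-e^{-\mathrm{ad}_J}}{\mathrm{ad}_J}\}\frac{\partial J}{\partial t}$, collapsing the bracket series to $e^{J}B\,\frac{\partial J}{\partial t}$, whereas you derive the same prefactor ($\Phi=e^{J}B=\int_0^1 e^{sJ}e^{(1-s)A}\,ds$) by elementary termwise differentiation of the power series — and you additionally spell out the ``easily verify curl-free'' step and the symmetric (Hessian) reading that the paper leaves implicit.
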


\begin{proof}
For brevity of space we provide an outline of the proof here. Using the differentiation formula \cite[p.115]{Brian2015}, we get:
\begin{equation}
\begin{split}
    \frac{\partial}{\partial t}e^J = ~& e^J\{\frac{I-e^{-ad_J}}{ad_J}\}\frac{\partial J}{\partial t} \\
    = ~& e^J(\frac{\partial J}{\partial t}-\frac{1}{2}[J,\frac{\partial J}{\partial t}]+\frac{1}{6}[J,[J,\frac{\partial J}{\partial t}]]-... ) \\
    = ~& e^JB\frac{\partial J}{\partial t},~t=x,y,z,
\end{split}
\end{equation}
for a $3 \times 3$ matrix $B.$ Here, $[.,.]$ denotes the Lie bracket. Now, we can easily verify that the matrix $e^J$ is curl-free.
\end{proof}
Even though the set of conditions in the lemma is technical, it is quite general and we point out that such conditions hold for a large family of functions. For example, for the 2D cases, we can verify that any harmonic function and its conjugate \cite{Gamelin2001} together obey these conditions and provide us with such conservative vectors fields. Non-trivial families of functions following these conditions also exist in 3D.





\bibliographystyle{IEEEtran}
\bibliography{IEEEexample}
%

\end{document}